\titleformat{\section}[runin]{\normalfont\bfseries}{\indent\thesection}{1em}{}[]
\titleformat{\subsection}[runin]{\normalfont}{\indent\thesubsection}{1em}{\so}[]
\makeatletter\renewcommand{\@biblabel}[1]{#1.}\makeatother
\def\thesection{\arabic{section}.}
\def\thesubsection{\arabic{section}.\arabic{subsection}.}
\numberwithin{equation}{section}
\newfont{\bb}{msbm10}
\def\tf{{\tilde f}}
\def\mathO{\mathit{O}}  
\newcounter{definition}
\newenvironment{definition}[1][]{\refstepcounter{definition}\par\smallskip
	{\so {Определение~\thedefinition. #1}} \rmfamily}{\smallskip}
\newcounter{method}
\newcounter{remark}
\newcounter{theorem}
\newenvironment{theorem}[1][]{\refstepcounter{theorem}\par\smallskip
	{\so {Теорема~\thetheorem. #1}} \rmfamily}{\smallskip}
\newcounter{lemma}
\newenvironment{lemma}[1][]{\refstepcounter{lemma}\par\smallskip
	{\so {Лемма~\thelemma. #1}} \rmfamily}{\smallskip}
\def\RightNote#1{\vadjust{\setbox1=\vtop{\parindent=0pt\hsize 28mm \footnotesize \baselineskip=9pt \leftskip=3mm \rightskip=4mm plus 4mm\hfill#1}\hbox to\hsize{\hfill\rlap{\smash{\raise .5ex\box1}}}}}
\def\LeftNote#1{\vadjust{\setbox1=\vtop{\hsize 50mm\parindent=6pt \baselineskip=9pt \rightskip=4mm plus 4mm{\bf #1}} \hbox{\kern-2.7cm\smash{\raise .5ex\box1}}}}
\begin{document}

\noindent
\MakeUppercase{Компьютерные методы}\\
УДК 
004.021, 
519.677 

\begin{center} 
\textbf{\MakeUppercase{Метод попарного сходства для задачи поиска числа доминирования}}$^1$ \\

\vspace{4mm}
\textbf{\copyright 2025 г.\;\;
Н.И.Шушко$^{a,**}$, Д.В.Лемтюжникова$^{a,*}$,}\\ 
$^a$Институт проблем управления РАН им. В. А. Трапезникова, Москва, Россия \\
\textit{* e-mail: darabbt@gmail.com} \\
\textit{** e-mail: shushko.ni@phystech.edu} \\

\smallskip
\smallskip
Поступила в редакцию  \\
После доработки  \\
Принята к публикации 
\end{center}

\footnotetext{Работа выполнена при финансовой поддержке
 Российского научного фонда (проект 22-71-10131).Статья рекомендована к
публикации программным комитетом конференции ИОИ-2024 (23-27 сентября
2024 г., г. Гродно) по результатам доклада.} 

{\footnotesize
В работе рассматривается задача поиска числа доминирования в процедурах двухуровневого голосования, где на первом этапе голосование проводится в локальных группах агентов, а на втором этапе результаты агрегируются в итоговое решение. Основной целью является определение минимальной доли агентов, поддерживающих предложение, при которой оно может быть принято. В работе используется метод попарного сходства для анализа структуры задачи и построения эвристических алгоритмов с гарантированной точностью. Рассмотрены три специальных случая: граф связи агентов как дерево, полный граф и регулярный граф с нечетным количеством вершин. Для каждого случая предложены новые эвристические алгоритмы и функции попарного сходства, позволяющие оценить погрешность решения. Результаты работы расширяют применение полиномиальных алгоритмов на более широкий класс задач и предоставляют критерии для выбора оптимального алгоритма на этапе постобработки.
}

\textbf{Ключевые слова: }метод попарного сходства, двухуровневое голосование, эвристический алгоритм, регулярный граф, полный граф, граф дерево.
\newpage
\begin{center} 
\textbf{\MakeUppercase{Pairwise similarity method for majority domination problem}} \\

\vspace{4mm}

\textbf{\copyright 2025 y.\;\; N.I. Shushko$^{a,*}$, D.V. Lemtyuzhnikova$^{a,**}$}\\
$^a$ V.A. Trapeznikov Institute of Control Sciences of Russian Academy of Sciences, Moscow, Russia \\
\textit{* e-mail: shushko.ni@phystech.edu} \\
\textit{** e-mail: darabbt@gmail.com} 

\end{center}

{\footnotesize
The paper considers the problem of finding the number of dominant voters in two-level voting procedures. At the first stage, voting is conducted among local groups of voters, and at the second stage, the results are aggregated to form a final decision. The goal is to determine the minimum proportion of voters supporting a proposal for it to be accepted.
The paper uses the method of pairwise comparisons to analyze the structure of the problem and develop heuristic algorithms with guaranteed accuracy. Special cases are considered, including the agent communication graph as a tree, complete graph, or regular graph with an odd number of vertices. New heuristic algorithms are proposed for each case, along with pairwise comparison functions to estimate the accuracy of the solution.
Results extend the use of polynomial algorithms to a broader class of problems, providing criteria for selecting the optimal algorithm during the post-processing stage.
}

\textbf{Keywords}: pairwise similarity method, two-level voting, heuristic algorithm, regular graph, complete graph, tree graph.

\section*{\quad Введение.} \label{intro}
В социальных и технических системах для принятии решений нередко используются процедуры двухуровневого голосования\cite{breer2017}. В таких процедурах на первом этапе голосование проводится в локальных группах агентов, на втором этапе результаты голосования групп агрегируются -- также посредством голосования -- в итоговое решение~\cite{chebotarev2023power}. Один из центральных вопросов при анализе двухуровневых процедур: <<При какой минимальной доле агентов, поддерживающих предложение, оно может быть в итоге принято данной процедурой?>> Для ответа на этот вопрос исследуется разность числа агентов, поддерживающих предложение, и агентов, не поддерживающих предложение Данную величину называют числом доминирования. 

В \cite{BROERE1995125} было показано, что для случая произвольного графа задача поиска числа доминирования является $NP$-полной.

В работе \cite{yeh1997algorithmic} дается оценку сложности задачи поиска числа доминорования $O(n^2)$ для графа связи агентов дерево, $O(n^3)$ для кографов, и полиномиальную сложность для $k$-деревьев для любого фиксированного $k$.

В \cite{holm2001majority} рассматриваются значения числа доминирования для многодольных графов и их объединений. Доказывается, что задача поиска числа доминирования для графа $G = \bigcup_i K_{n_i}$, где $K_{n_i} -$ это полный граф размером $n_i$, является $NP$-полной. Однако, сложность задачи для многодольного графа $G = K_{n_1, \ldots, n_m}$ в работе не дается ($3$-й пункт в списке открытых проблем).

В работе \cite{chebotarev2023power} была рассмотрена задача двухуровневого голосования на регулярном графе, даны аналитические формулы значений числа доминирования для каждого регулярного графа с нечетным количеством вершин. 

В \cite{lemtyuzhnikova2023pairwise} предлагается эвристический алгоритм решения задачи поиска числа доминирования c гарантированной абсолютной погрешностью основанный на методе попарного сходства.

В данной работе предлагается уточнить лемму об удалении ребра из графа, которая была сформулирована в работе \cite{lemtyuzhnikova2023pairwise}, и предложить модификацию алгоритма с более точной функцией попарного сходства, что позволяет сделать более точную оценку абсолютной погрешности алгоритма. Также опишем процесс построения эвристических алгоритмов на основе регулярных графов. Предворительные результаты исследования были анонсированы в докладе \cite{Sh2024}.

\section{Постановка задачи.}
Рассматривается граф $G = (V, E)$, $|V| = n$, каждая вершина которого имеет петлю. Вершина $v \in V$ имеет множество соседей $N_v = \{w \in V : (v, w) \in E\}$, в которое входит и~$v$.  На множестве вершин $V$ задается функция мнений $f: V\to\{-1, 1\}$ , которая каждой вершине ставит в соответствие ее мнение. Функция мнений расширяется на подмножества $W \subseteq V$ как сумма мнений вершин множества $W$ по формуле $f(W) = \sum_{v \in W} f(v)$. В ходе голосования вершина $v \in V$ учитывает мнение вершин
из $N_v$, и {\em голосует <<за>>}, если $f(N_v)>0,$ либо {\em<<против>>} в ином случае. Пару $(G,f)$ будем называть {\em конфигурацией}. Пусть |$V^+|= \{v \in V : f(N_v) >0\}$ -- множество голосующих <<за>> вершин.

Предложение, поставленное на голосование, принимается, если число голосующих <<за>> вершин строго больше $|V|/2$. Функцию мнений на которой предложение принимается будем называть строго мажоритарной.
Под результатом голосования будем понимать принятие либо отклонение предложения.

\begin{definition} 
Числом строгого доминирования для графа $G$ называется величина $\gamma(G) = \min\limits_f\{f(V): |V^+| > |V|/2\}$, где $f(V)$ -- строго мажоритарная функция мнений на множестве~$V$. 
Функцию мнений графа $G$ будем называть {\em оптимальной\/}, если на ней достигается число строгого доминирования $\gamma(G)$.
\end{definition}

Рассматривается задача поиска такой функции мнений $f(v)$ графа $G$, которая реализует число строгого доминирования $\gamma(G)$, то есть
обеспечивает успех голосования при минимальном числе вершин $v$ таких, что $f(v) = 1$.

\section{Метод попарного сходства.}
В данной работе предлагается использовать метод попарного сходства  для расширения применения алгоритмов для простых специальных случаев и с помощью функции попарного сходства анализировать успешность такого расширения. 

 Метод попарного сходства обобщает метрический подход, который изначально был сформулирован для задач теории расписаний\cite{lazarev2021metric,  lazarev2020polynomially, lazarev2021metricB, bukueva2022analysis, cheng2022metric}. В нем учитываются разные структуры примеров задач и для сравнения задач используется функция попарного сходства, для которой соблюдение аксиом метрики не является обязательным. 
 Под примером задачи будем понимать конкретный фиксированный набор параметров задачи, в который входит количество переменных, коэффициенты целевой функции и матрица ограничений.
\begin{definition}  
 {\em Специальным случаем} $P^{\alpha}$ будем называть подкласс исходной 
  $NP-$полной задачи $P$, где для всех примеров задач $A \in P^{\alpha}$ выполняется закономерность~$\alpha$. 
\end{definition}
\begin{definition}
 {\em Функцией попарного сходства} будем называть функцию, определяющую различие примеров по некоторому критерию.
\end{definition}  
 
Пусть дан простой частный случай $P^{\alpha}$ и соответствующий полиномиальный или псевдополиномиальный алгоритм. Тогда для любых $B \in P^{\alpha}$ и $A\not\in P^{\alpha}, $ значение функции различия $\tau$ на паре $(A,B)$ равна абсолютной (или относительной) погрешности целевой функции $A$ при использовании в качестве решения полиномиального (псевдополиномиального) точного решения задачи ~$B$.

Также могут быть использованы как критерии сходимость алгоритмов и время работы алгоритмов. В данной работе используется критерий на основе точности. 

В работе \cite{lemtyuzhnikova2023pairwise} предлагается следующие шаги построения и использования функции попарного сходства:

\begin{enumerate}
    \item Найти один или несколько специальных случаев задачи. 
    
    \item Построить функцию попарного сходства, которая позволяет оценивать расстояние от любой задачи до каждого из возможных специальных случаев.
    
    \item Найти ближайший специальный случай к заданному примеру $A$, используя выбранную функцию попарного сходства.
    \item \label{i:Find}
    Найти решение наиболее близкого примера к $A$ из множества примеров, которые являются специальным случаем.
      
    \item Оценить гарантированную погрешность полученного решения с помощью соответствующей функции попарного сходства.
    
    \item Проверить полученное решение на допустимость, и если какие-то ограничения нарушены, то использовать функцию, преобразующую $X_B$ в $X_A=X^*_B.$
\end{enumerate}

Рекомендации, по нахождению новых специальных случаев, а также алгоритм для нахождения функции попарного сходства можно найти в \cite{lemtyuzhnikova2023pairwise}.

После построения функции попарного сходства предлагается синтезировать алгоритм поиска допустимого решения и получить оценку абсолютного отклонения данного решения от оптимального. 

На основе функции различия для нескольких специальных случаев, можно принять решение о выборе эвристического алгоритма для решения исходной задачи.

В следующих главах будут рассмотрены три специальных случая: граф связи как дерево, полных граф и регулярный граф для нечетного количества вершин. Для каждого случая будет предложен новый эвристический алгоритм, а также алгоритм выбора одной из этих эвристик для получения решения исходной задачи.

\section{Специальный случай на основе графа дерева.}
 В работе \cite{lemtyuzhnikova2023pairwise} представлен алгоритм для решения задачи поиска числа строгого доминирования. Он заключается в переходе от исходной задачи к её специальному случаю — задаче с графом мнений в виде остовного дерева исходного графа. Затем алгоритм возвращает решение задачи на дереве к решению исходной задачи путём добавления рёбер и изменения функции мнений, с условием сохранения результата голосования. Таким образом алгоритм можно представить как переход между $k$ графами, где $k$ - цикломатическое число исходного графа. 
 На основе леммы 1 индукционно строилась оценка точности предлагаемого алгоритма, где отношение из леммы 1 дает оценку точности на одном переходе между графами.
\begin{lemma}
 Пусть $G$ -- непустой граф, а $H$ -- граф, получаемый из $G$ удалением произвольного ребра $uv$. Тогда
\begin{equation}\label{eq_cycles}
-4\, \le\, \gamma(G) - \gamma(H)\, \le\,  2.
\end{equation}
\end{lemma}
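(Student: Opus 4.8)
The plan is to prove the two inequalities separately, in each case starting from an optimal opinion function on one graph and editing it into an accepting function on the other while controlling the growth of $f(V)$. The basic accounting is that the only cheap, safe edit is to flip a vertex's opinion from $-1$ to $+1$: such a flip raises $f(V)$ by exactly $2$ and, crucially, can only increase every neighborhood sum $f(N_w)$, hence can never turn a ``for'' voter into an ``against'' voter (monotonicity). The second structural fact I will use repeatedly is locality: inserting or deleting the edge $uv$ changes $N_w$ only for $w\in\{u,v\}$, so only the votes of $u$ and $v$ can change, while every other vertex keeps its old vote. Since the all-$+1$ function is always accepting, $\gamma$ is well defined on both $G$ and $H$, and the threshold $|V|/2$ is the same for both.

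For the upper bound $\gamma(G)-\gamma(H)\le 2$, I pass from $H$ to $G$ by inserting $uv$. Taking an optimal $f$ on $H$, the neighborhood sum of $u$ changes by $+f(v)$ and that of $v$ by $+f(u)$, so $u$ is endangered only if $f(v)=-1$ and $v$ only if $f(u)=-1$. I split on the pair $(f(u),f(v))$. If both are $+1$, nothing is endangered and $f$ already works on $G$ at no cost. If exactly one endpoint is $-1$, flipping that endpoint up (cost $2$) removes the only threat, and by monotonicity every other vote is preserved. The one delicate case is $f(u)=f(v)=-1$, where both endpoints are simultaneously at risk; here I flip just one of them, say $u$, to $+1$. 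Because $u$ and $v$ are adjacent in $G$, this single flip raises both $f(N_u)$ and $f(N_v)$, rescuing $v$ as well, so a single flip suffices. In every case one checks $V^+$ does not shrink and $\gamma(G)\le\gamma(H)+2$.

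For the lower bound $\gamma(G)-\gamma(H)\ge -4$, I pass from $G$ to $H$ by deleting $uv$, starting from an optimal $f$ on $G$. Now the sum at $u$ changes by $-f(v)$ and at $v$ by $-f(u)$, so $u$ is endangered only if $f(v)=+1$ and $v$ only if $f(u)=+1$. Again the cases with at least one endpoint equal to $-1$ cost at most one flip. The genuinely hard case, and the source of the bound $4$, is $f(u)=f(v)=+1$: both endpoints may drop to ``against'', and since they are \emph{non-adjacent} in $H$, neither the upper-bound trick nor flipping $u,v$ themselves (already $+1$) is available. Instead I argue that if $u$ is lost then its old sum was exactly $1$, so after subtracting the $+2$ contributed by the two positive endpoints the remaining neighbors of $u$ sum to $-1$; hence $N_u$ contains a vertex of opinion $-1$ distinct from $v$, and flipping it up restores $u$. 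Doing the same for $v$ costs at most one further flip, for a total of at most $4$, giving $\gamma(H)\le\gamma(G)+4$.

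In both directions the closing check is that monotonicity guarantees no originally-``for'' vertex is ever lost, so the modified function still has more than $|V|/2$ ``for'' voters and is strictly majoritarian; the bound on $f(V)$ then transfers directly to $\gamma$. I expect the main obstacle to be precisely the two-positive-endpoints case of the lower bound: verifying via the ``sum equals $1$'' boundary computation that a restoring $-1$ neighbor always exists, and noting that the two restoring neighbors of $u$ and $v$ need not coincide so that two separate flips may be required, is what pins the constant at $4$ rather than $2$.
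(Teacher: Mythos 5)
Your proof is correct and follows essentially the same route as the paper's own argument (given there in the refined Lemmas 2--3): in each direction start from an optimal opinion function, insert or delete the edge $uv$, split into cases on $(f(u),f(v))$, and repair the function by flipping opinions from $-1$ to $+1$, using monotonicity of such flips; the paper's worst cases (cost $2$ when adding the edge, cost $4$ in the $f(u)=f(v)=+1$ case when deleting it, via flipping one $-1$ neighbor of each endpoint) are exactly yours. Your boundary computation showing the rescuing $-1$ neighbor exists matches the paper's observation that $f(N_u(H))=0$ with $f(u)=1$ forces such a neighbor.
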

Данное отношение справедливо для любых двух графов отличающихся на одно ребро. В работе предлагается уточнить лемму 1 для случаев удаления и добавления ребра для трех комбинаций мнений смежных вершин между которыми добавляется или удаляется ребро: <<за-за>>, <<против-против>>, <<за-против>>. Полученный результат позволит построить более точную оценку отклонения полученного решения от оптимального, а также позволит построить алгоритм выбора остовного дерева.

Расширим ее формулировку рассмотрев отдельно два случая: случай когда известна оптимальная функция мнений графа $G$ и случай когда известна оптимальная функция мнений графа $H$ соответственно. 

\begin{definition}
 Пусть заданы конфигурации $G^f$ и $G'^{\tf}$ с общим множеством вершин~$V$. Будем говорить, что {\em условия голосования\/} вершины $v$ в графе $G'^{\tf}$ лучше по сравнению с графом $G^f$, если  $v$ имеет не меньше соседей в $G'^{\tf}$ с мнением $+1$ чем в $G^f$ и не больше соседей с мнением $-1$ в $G'^{\tf}$ чем в~$G^f$.
\end{definition}

\begin{lemma}    
 Пусть $G$ -- непустой граф, а $H$ -- граф, получаемый из $G$ удалением произвольного ребра $uv$. Пусть мы знаем оптимальную функцию мнений графа $G$. Тогда если
\begin{itemize}
    \item $f(u)=f(v)=1$, то 
    \begin{equation}
-4\, \le\, \gamma(G) - \gamma(H)\, \le\,  2;
\end{equation}
    \item $f(u)=f(v)=-1$, то 
    \begin{equation}
0\, \le\, \gamma(G) - \gamma(H)\, \le\,  2;
\end{equation}
    \item в остальных случаях 
    \begin{equation}
-2\, \le\, \gamma(G) - \gamma(H)\, \le\,  2
\end{equation}
\end{itemize}
\end{lemma}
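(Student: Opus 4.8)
The plan is to prove the upper bound and the three lower bounds separately. The upper bound $\gamma(G) - \gamma(H) \le 2$ coincides in all three cases with the one in Lemma 1, which holds for any two graphs differing by a single edge, so it transfers verbatim; the whole refinement lies in sharpening the lower bound according to the values $f(u), f(v)$ of the known optimal opinion function $f$ of $G$. For each lower bound I start from this optimal $f$, with $f(V) = \gamma(G)$ and $|V^+| > n/2$, and feed it unchanged into $H$. Deleting $uv$ leaves every neighbourhood intact except those of $u$ and $v$ (the loops at $u$ and $v$ survive), so only the votes of $u$ and $v$ can move, through $f(N_u^H) = f(N_u^G) - f(v)$ and $f(N_v^H) = f(N_v^G) - f(u)$. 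The remaining argument counts how many vertices leave $V^+$ and restores the majority by flipping opinions from $-1$ to $+1$, each flip raising $f(V)$ by $2$ and, being monotone in every sum $f(N_w)$, never removing a vertex from $V^+$.

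When $f(u) = f(v) = -1$, both $f(N_u^H)$ and $f(N_v^H)$ increase by one, so no vertex can leave $V^+$; hence $f$ is still strictly majoritarian on $H$ with value $\gamma(G)$, which gives $\gamma(H) \le \gamma(G)$ and the lower bound $0$. In the mixed case, say $f(u) = 1$ and $f(v) = -1$ (the reverse being symmetric), only $v$'s sum can drop, so at most one vertex leaves $V^+$. If the majority survives, again $\gamma(H) \le \gamma(G)$; otherwise $v$ has switched, which forces $f(N_v^G) = 1$ and $f(N_v^H) = 0$, so flipping $f(v)$ to $+1$ raises $f(N_v^H)$ to $2 > 0$, returns $v$ to $V^+$ at cost $2$, and yields $\gamma(H) \le \gamma(G) + 2$, i.e. the lower bound $-2$.

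The case $f(u) = f(v) = 1$ is where I expect the real work and is the main obstacle: both sums drop, both $u$ and $v$ may leave $V^+$, and neither can be repaired by flipping itself, since both are already $+1$. The key step is the parity observation that a vertex leaves $V^+$ only when its shifted sum lands exactly on $0$: if $u$ leaves then $f(N_u^G) = 1$ and $f(N_u^H) = 0$, so the $+1$ and $-1$ opinions inside $N_u^H$ balance and, because $u$ itself contributes $+1$, at least one $-1$ vertex is available; flipping it pushes $f(N_u^H)$ to $2$ and returns $u$ to $V^+$ at cost $2$. Carrying this out for $u$ and for $v$ costs at most $4$ (and less when they share a flipped neighbour), so $\gamma(H) \le \gamma(G) + 4$ and the lower bound is $-4$. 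The delicate points are exactly these parity facts pinning $f(N_u^H)$ and $f(N_v^H)$ to $0$ at the instant of a switch, the guaranteed existence of a $-1$ vertex to flip, and the monotonicity that prevents the repairs from removing any vertex from $V^+$.
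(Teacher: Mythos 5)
Your proof is correct and takes essentially the same route as the paper's: the uniform upper bound $\gamma(G)-\gamma(H)\le 2$ (which the paper re-derives by considering the unknown optimal function of $H$ under edge addition, while you simply cite Lemma~1 --- the same content), and lower bounds obtained by transferring the known optimal function of $G$ into $H$, observing by parity that a defecting vertex must have had neighbourhood sum exactly $1$ (hence $0$ after deletion), and repairing at cost $2$ per flip using the guaranteed existence of a $-1$ neighbour. Your explicit remarks on monotonicity of the repairs and on a possibly shared flipped neighbour are points the paper leaves implicit, but they do not change the argument.
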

\begin{proof}
Для построения оценки отклонения числа строгого доминирования полученного решения, оценим  $\gamma(H)$ сверху и снизу.

Так как добавление или удаление ребра между двумя вершинами с мнением <<против>> не изменит результат голосования этих вершин, нас интересуют случаи, когда вершины $u$ и/или $v$ в графе $H^f$ голосуют <<за>>. В таком случае эти голоса являются решающими для принятия предложения, без них предложение будет отвергнуто. В этом случае после добавления ребра условия голосования этих вершин могут ухудшиться, что может привести к невыполнению условия строгой мажоритарности функции мнений на новом графе $G^f$. В случае, когда вершины голосуют против, изменения условий голосования этих вершин не отменяют принятия предложения и сохраняют строгую мажоритарность функции~$f$ в $G^f$.

Пусть существует оптимальная функция мнений $f$ на графе $H$, тогда $f(V) = \gamma(H)$. Оценим число строгого доминирования на графе $G$. Для этого рассмотрим случаи, когда функция мнений $f$ после добавления ребра $uv$ к графу $H$ перестает быть строго мажоритарной функцией мнений на новом графе $G$, и минимально изменим $f$ для получения новой функции $\tilde f$, которая будет строго мажоритарной. Значение  $\tilde f(V)$ служит верхней оценкой $\gamma(G)$.

Рассмотрим все комбинации мнений вершин $u$ и $v$:

\begin{enumerate}
         \item $f(u)=f(v)=1.$ В этом случае добавление соседа с мнением $+1$, не ухудшит условий голосования для обеих вершин. Следовательно функция мнений $f$ остается строго мажоритарной на графе $G$, но может перестать быть оптимальной. Отсюда $\gamma(G) \le \tf(V)=f(V)=\gamma(H).$
        
    \item $f(u)=f(v)=-1.$ Добавление связи между $u$ и $v$ изменит условия голосования в худшую сторону для обеих вершин.
        \begin{itemize}
            \item Если $u$ и $v$ голосовали <<за>> в $H^f$ и $f(N_u(H))=1$ или $f(N_v(H))=1$, после добавления ребра $uv$ результат голосования одной из вершин изменится на <<против>> и функция мнений перестанет быть строго мажоритарной. Для перехода к строго мажоритарной функции мнений достаточно поменять мнение одной из вершин с $-1$ на $+1$. Тогда у этой вершины  $\sum \tf(k)$ по соседям останется таким же, как и при конфигурации $H^f$, а у второй вершины увеличится на $1$. Следовательно голоса сохранят свое  и функция $\tf$ будет строго мажоритарной. В этом случае $\tf(V)=f(V)+2=\gamma(H)+2$, так как изменилось только одно мнение с $-1$ на $+1$.
            \item Если одна из вершин голосовала <<за>>, а вторая <<против>>, то для получения строго мажоритарной функции мнений $\tf$ достаточно изменить мнение одной из вершин $u$ или $v$ с $-1$ на $+1$. В этом случае также $\tf(V)=f(V)+2=\gamma(H)+2$.
        \end{itemize}
        
    \item $f(u)=-1$, $f(v)=1.$ Для вершины $u$ добавление соседа $v$ не ухудшит условий голосования. Если $v$ голосует против в $H^f$, то добавление соседа $u$ не нарушит строгой мажоритарности функции $f$, так как не приведет к потере голоса <<за>>. В случае голосования $v$ <<за>> для получения строго мажоритарной функции мнений $\tf$ достаточно изменить мнение $u$ на $+1$. Изменение мнения одной вершины с $-1$ на $+1$ увеличит значение функции $\tf(V)$ относительно $f(V)$ на~$2$.
\end{enumerate}

Так как функция мнения для $H$ неизвестна, для оценки $\gamma(H)$ воспользуемся самым худшим случаем:
\begin{equation}\label{equp}
    \gamma(G) - \gamma(H)\, \le\,  2. 
\end{equation}

 Пусть теперь мы имеем оптимальную функцию мнений $f$ на графе $G$. Аналогично оценим сверху $\gamma(H)$, задав строго мажоритарную функцию мнений $\tf$, получаемую из $f$ минимальными изменениями:

\begin{enumerate}
    \item $f(u)=f(v)=-1.$ Удаление связи с вершиной с мнением $-1$ не ухудшает условий голосования для обеих вершин $u$ и $v$. Тогда функция мнений $f$ остается строго мажоритарной для графа $H$ и $\tf=f,$ то есть $\tf(V)=f(V)= \gamma(G)$.
    
    \item $f(u)=f(v)=1.$ Удаление ребра $uv$ приводит к ухудшению условий голосования для обеих вершин $u$ и $v$.

    \begin{itemize}
        \item Если $u$ или $v$ голосует <<за>> в $G^f$, то функция мнений $f$ может перестать быть строго мажоритарной на графе $H$. При голосовании <<за>> вершины $u$ это происходит, если $f(N_u(G))=1$. После удаления ребра $uv$ эта сумма для функции мнений $f$ равна нулю на графе~$H$. Для получения строго мажоритарной функции мнений изменим мнение одной из соседних вершин вершины $u$ с $-1$ на $1$.
        
        Такую замену можно сделать всегда, так как невозможно получить $f(N_u(H))=0$, без хотя бы одной соседней вершины с мнением $-1$, поскольку мнение вершины $u$ равно~$1$. После замены сумма мнений увеличится на $2$ и вершина $u$ будет голосовать <<за>>. В этом случае значение функции мнений $\tf(V)=f(V)+2$.
        
        \item Если $u$ и $v$ голосуют <<за>> в $G^f$, то аналогично предыдущему случаю $f$ может перестать быть строго мажоритарной на графе $H$ из-за изменения голосов у обеих вершин. Тогда требуется заменить мнения у одной соседней вершины для каждой вершины $u$ и $v$, как в предыдущем случае. Тогда для новой функции мнений $\tf$ имеем $\tf(V)-f(V)=4$, так как потребовалось два изменения мнений с $-1$ на $+1$. Отсюда $\gamma(H) \le \tf(V)=\gamma(G)+4$.
        
    \end{itemize}
    \item $f(u)=1, f(v)=-1$.
    Удаление ребра между вершинами улучшает условия голосования для вершины $u$ и ухудшает для вершины $v$. Если вершина $v$ голосует <<за>> в $G^f$,
    удаление ребра может изменить сумму мнений по соседям в $H^f$ и она перестанет быть строго мажоритарной. Для получения строго мажоритарной функции мнений $\tf$ достаточно изменить мнение вершины $v$ на $\tf(v)=+1$. Сумма мнений соседей вершины $v$ увеличится на $2$, что компенсирует удаление соседа с мнением $+1$, и голос <<за>> сохранится. Тогда значение функции мнений $\tf(V)$ из-за изменения мнения с $-1$ на $+1$ превосходит $f(V)$ на $2$. Поэтому $\gamma(H) \le \tf(V)=\gamma(G) + 2.$
\end{enumerate}

Так как функция $f$ в этом случае известна, оценка $\gamma(G) - \gamma(H)$ может быть записана для каждой комбинации мнений вершин $u$ и $v$. Объединив полученные оценки для каждого случая с оценкой сверху (5) получим утверждение леммы.
\end{proof}
\begin{lemma}
 Пусть $G$ -- непустой граф, а $H$ -- граф, получаемый из $G$ удалением произвольного ребра $uv$. Пусть мы знаем оптимальную функцию мнений графа $H$. Тогда если
\begin{itemize}
    \item $f(u)=f(v)=1$, то 
    \begin{equation}
-4\, \le\, \gamma(G) - \gamma(H)\, \le\,  0;
\end{equation}
    \item $f(u)=f(v)=-1$, то 
    \begin{equation}
-4\, \le\, \gamma(G) - \gamma(H)\, \le\,  2;
\end{equation}
    \item в остальных случаях 
    \begin{equation}
-4\, \le\, \gamma(G) - \gamma(H)\, \le\,  2.
\end{equation}
\end{itemize}
\end{lemma}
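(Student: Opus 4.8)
The plan is to prove Lemma 3 as the mirror image of Lemma 2, interchanging the roles of $G$ and $H$. In Lemma 2 an optimal $f$ was known on $G$, which let us sharpen the \emph{lower} bound case by case while the universal upper bound $2$ came from the worst add-edge case. Here an optimal $f$ is known on $H$, so symmetrically I would sharpen the \emph{upper} bound case by case, by the same add-edge construction that was already carried out in the first half of the proof of Lemma 2, and leave the \emph{lower} bound universal. Indeed, the value $-4$ appearing in all three cases is nothing but Lemma 1: since $H$ is obtained from $G$ by deleting the edge $uv$, the inequality $-4 \le \gamma(G)-\gamma(H)$ holds for this pair of graphs irrespective of which optimal function is known. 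So the substance of the proof lies entirely in the three upper bounds.

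For those bounds I would take $f$ optimal on $H$, so that $f(V)=\gamma(H)$ and $f$ is strictly majoritarian on $H$, and pass to $G$ by adding the edge $uv$. This changes only the neighbourhoods $N_u$ and $N_v$, hence only the votes of $u$ and $v$ can be affected. If $f(u)=f(v)=1$, then $u$ gains the $+1$ neighbour $v$ and $v$ gains the $+1$ neighbour $u$; both $f(N_u)$ and $f(N_v)$ only increase, no ``for'' vote is lost, and $f$ is already strictly majoritarian on $G$, so $\gamma(G)\le f(V)=\gamma(H)$ and $\gamma(G)-\gamma(H)\le 0$. If $f(u)=f(v)=-1$, adding $uv$ lowers each of $f(N_u),f(N_v)$ by $1$ and may flip $u$ and/or $v$ to ``against''; here I would flip the opinion of $v$ from $-1$ to $+1$, which raises $f(N_u)$ by $2$ (as $v\in N_u(G)$) and raises $f(N_v)$ by $2$ (as $v\in N_v$), giving a net change $+1$ at both $u$ and $v$ relative to $H$, so both keep their ``for'' vote; the resulting $\tf$ has $\tf(V)=f(V)+2$, whence $\gamma(G)\le\gamma(H)+2$. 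In the mixed case, say $f(u)=1,f(v)=-1$, adding $uv$ can only spoil $u$, and the same single flip of the $-1$ neighbour $v$ of $u$ repairs it with $\tf(V)=f(V)+2$, so again $\gamma(G)-\gamma(H)\le 2$.

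Combining the case-dependent upper bounds $0,2,2$ with the universal lower bound $-4$ from Lemma 1 yields the three displayed inequalities. The one step that needs genuine care is the case $f(u)=f(v)=-1$, where a single flip must repair \emph{both} endpoints at once: I would justify this by the choice of $v$, which after adding the edge is simultaneously a neighbour of $u$ and (through its loop) of itself, together with the monotonicity observation that changing a vertex from $-1$ to $+1$ never decreases any neighbourhood sum, so no previously ``for'' vertex anywhere in $G$ is lost. I expect no further difficulty. It is worth noting that the lower bound cannot be improved below $-4$ from the knowledge of $f$ on $H$ alone: the optimal opinion function on $G$ is not controlled by $f$, so the general estimate of Lemma 1 is the best one can assert, which is exactly why the lower bound is uniform across all three cases while the upper bound splits.
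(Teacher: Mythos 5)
Your proposal is correct and takes essentially the same route as the paper: the paper's own proof of this lemma is a one-line appeal to the analogy with Lemma 2, whose first half contains exactly your add-edge case analysis (case-dependent upper bounds $0,\,2,\,2$ obtained by minimally repairing the optimal function of $H$ after adding $uv$), combined with the universal lower bound $-4$ coming from the delete-edge worst case of Lemma 1. If anything, your uniform single-flip repair in the case $f(u)=f(v)=-1$ (flipping one endpoint fixes both vertices at once, since each gains $+2$ in its neighbourhood sum) is cleaner than the paper's subcase discussion, and your verification that no other vertex can lose its vote makes explicit a point the paper leaves implicit.
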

\begin{proof}
    Аналогично доказательству леммы 2 рассмотрим две оценки $\gamma(G)$ сверху и снизу. Имея оптимальную функцию мнений графа $H$ мы можем рассмотреть случаи оценки сверху для каждого случая распределения мнений.
    
\end{proof}
\begin{theorem}  
Пусть связный граф $G$ имеет цикломатическое число $k$, а $T$ -- его произвольное остовное дерево.
Тогда, если для получения дерева $T$ разрываются $s$ ребер между вершинами <<за>> и $l$ между остальными $l+s=k$, верно неравенство:
\begin{equation}
\gamma(T)-4k \le \gamma(G) \le \gamma(T) + 2l.
\end{equation}\label{Th1}
\end{theorem}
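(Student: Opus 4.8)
План состоит в том, чтобы представить $G$ как последний элемент цепочки графов, получаемой из остовного дерева $T$ последовательным добавлением $k$ недостающих рёбер, и контролировать изменение $\gamma$ на каждом шаге с помощью лемм 2 и 3. Поскольку связный граф $G$ имеет цикломатическое число $k$, его остовное дерево $T$ не содержит ровно $k$ рёбер $e_1,\ldots,e_k$ графа $G$. Зафиксируем их порядок и положим $G_0=T$, $G_i=G_{i-1}+e_i$, так что $G_k=G$. Телескопическое суммирование даёт
\begin{equation*}
\gamma(G)-\gamma(T)=\sum_{i=1}^{k}\bigl(\gamma(G_i)-\gamma(G_{i-1})\bigr),
\end{equation*}
причём каждое слагаемое совпадает с величиной $\gamma(G)-\gamma(H)$ из лемм при $G=G_i$ и $H=G_{i-1}$, где ребро $e_i=uv$ удаляется при переходе от $G_i$ к $G_{i-1}$.

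Для левого неравенства достаточно грубой части лемм: в каждом из трёх случаев как леммы 2, так и леммы 3 выполняется $\gamma(G_i)-\gamma(G_{i-1})\ge -4$ независимо от мнений концов ребра $e_i$. Складывая эти $k$ неравенств, получаем $\gamma(G)-\gamma(T)\ge -4k$, то есть $\gamma(T)-4k\le\gamma(G)$.

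Для правого неравенства я бы применил уточнённые оценки леммы 3, беря в качестве $H=G_{i-1}$ меньший граф, оптимальную функцию мнений которого проносим вдоль цепочки. Будем различать рёбра по мнениям их концов: если $e_i$ соединяет две вершины с мнением $+1$ (ребро <<за-за>>), то лемма 3 даёт $\gamma(G_i)-\gamma(G_{i-1})\le 0$; во всех остальных случаях она даёт $\gamma(G_i)-\gamma(G_{i-1})\le 2$. При $s$ рёбрах первого типа и $l$ рёбрах второго типа ($l+s=k$) суммирование даёт $\gamma(G)-\gamma(T)\le 0\cdot s+2\cdot l=2l$, то есть $\gamma(G)\le\gamma(T)+2l$.

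Наиболее тонким местом — и главным препятствием — я считаю согласованность разметки рёбер вдоль цепочки: числа $s$ и $l$ должны отвечать единой классификации всех $k$ рёбер, тогда как лемма 3 классифицирует $e_i$ по оптимальной функции мнений текущего графа $G_{i-1}$, которая может меняться от шага к шагу. Я бы устранил это, опираясь на первый случай доказательства леммы 2: добавление ребра между двумя вершинами с мнением $+1$ сохраняет строгую мажоритарность функции, а значит оставляет её оптимальной и не меняет мнений остальных вершин. Поэтому, упорядочив вставки так, чтобы в концах каждого $e_i$ мнения совпадали с заданной глобальной разметкой на $s$ и $l$, мы обеспечим выполнение посылок леммы 3 на каждом шаге; после этого два приведённых выше телескопических суммирования дают в точности требуемую двустороннюю оценку.
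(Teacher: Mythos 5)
Your proof takes essentially the same route as the paper: the paper likewise builds the chain $T=:H_0,H_1,\ldots,H_k=:G$ by adding the $k$ missing edges one at a time, applies Lemma~3 to each consecutive pair (upper bound $0$ for a <<за-за>> edge, $+2$ otherwise, lower bound $-4$ always), and sums the resulting inequalities to get $\gamma(T)-4k\le\gamma(G)\le\gamma(T)+2l$. One caution about your final paragraph: the claim that preserving strict majority after adding a <<за-за>> edge <<оставляет её оптимальной>> is false --- the paper itself notes in case~1 of Lemma~2 that the function remains strictly majoritarian <<но может перестать быть оптимальной>>, so your patch for the step-to-step consistency of the edge classification does not actually work; that said, the consistency issue you correctly identified is glossed over by the paper's own proof as well, which simply assumes the classification into $s$ and $l$ edges is coherent across the whole chain.
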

\begin{proof}
    По определению цикломатического числа, дерево $T$ получается из графа $G$ удалением $k$ ребер. Добавлением этих ребер одного за другим порождает последовательность графов $T\!:=\! H_0, H_1, ..., H_k\!=:\!G$.
    Доказательство теоремы сводится к применению леммы 3 к парам графов $H_{i-1},H_i$, $i=1,...,k$ и сложению между собой получающихся неравенств для этих пар графов.
    
    Приведем пример для двух первых шагов. Пусть на первом шаге ребро добавляются между вершинами <<за>>, а на втором между вершинами <<за>> и <<против>>. 
    Тогда справедливо отношение для первого шага:
    \begin{equation}
        \gamma(T) - 4 \le \gamma(H_0) \le \gamma(T)
    \end{equation}
    На втором шаге выполняется отношение:
    \begin{equation}
        \gamma(H_0) - 4 \le \gamma(H_1) \le \gamma(H_0) +2
    \end{equation}
    После сложения этих неравенств и вычитания $\gamma(H_0)$ получим:
    \begin{equation}
        \gamma(T) - 8 \le \gamma(H_1) \le \gamma(T) +2
    \end{equation}
    Аналогично можно продолжить шаги и после $k$ итераций получить конечное отношение (3.9).
\end{proof}
Теорема 1 задает функцию попарного сходства, согласно которой каждое остовное дерево графа $G$ с цикломатическим числом $k$ находится на расстоянии $k$ от графа~$G$. 

Остовное дерево графа $G$ может быть найдено с помощью алгоритмов поиска в глубину и ширину, которые имеют линейную временную сложность. С другой стороны, остовные деревья относятся к простому специальному случаю задачи вычисления числа доминирования, что позволяет вычислить число доминирования для любого из них за полиномиальное время $\mathO(n^2)$  и использовать оптимальную функцию мнений дерева для построения строго мажоритарной функции мнений графа $G$ , дающей приближенное решение исходной задачи.
Предлагается следующий алгоритм нахождения мажоритарной функции мнений графа $G$ с цикломатическим числом $k$:

\begin{enumerate}
    \item Найти остовное дерево $T$ графа $G$.  Вычислить оптимальную строго мажоритарную функцию мнений для остовного дерева с помощью полиномиального алгоритма.
    \item Последовательно добавить $k$ ребер к остовному дереву $T$ до достижения графа $G$:
        \begin{itemize}
            \item Добавить ребро, которое принадлежит $E(G)\!\smallsetminus\! E(T)$. Проверить выполнимость $|V^+| > |V|/2$ для текущей функции мнений. Если неравенство выполняется, перейти к добавлению следующего ребра.
            \item Определить инцидентную к новому ребру вершину с мнением $-1$. Изменить ее мнение на $+1$.
        \end{itemize}
\end{enumerate}

В соответствии с общей схемой применения метода попарного сходства в качестве приближенного числа доминирования графа $G$ необходимо выбрать наилучшее решение, которое удается построить на основе оптимальных функций мнений остовных деревьев $G$.

Неравенство \ref{Th1} задает оценку точности полученного решения. Вычислительная сложность алгоритма равняется сложности поиска числа строгого доминирования на дереве  $\mathO(n^2)$.

\section{Специальный случай на основе полного графа.}

Рассмотрим специальный случай задачи нахождения числа доминирования с графом связей в виде полного графа. В этом случае минимальное число доминирования равно двум в случае четного количества вершин и 1 в случае нечетного количества:
\begin{equation}
    \gamma(G) = 
    \begin{cases}
    1, n - \text{нечетное};\\
    2, n - \text{четное};
    \end{cases}
\end{equation}
Данное число реализуется на функции мнений с $(n-1) //  2$ количеством вершин с мнением <<против>> и остальными вершинами <<за>>.

Для получения полного графа необходимо добавить к исходному графу $\frac{n(n-1)}{2} - p$ ребер, где $p = |E|$  - количество ребер исходного графа. Построим на основе леммы 2 функцию попарного сходства.

\begin{theorem}    
 Пусть $G$ -- непустой граф, а $H$ -- граф, получаемый из $G$ удалением $k$ ребер, где $k = l+s+m$, $l$ - количество удаленных ребер смежных вершинам с мнением <<за>>, $s$ - количество удаленных ребер смежных вершинам с мнением <<против>>, $m$ - количество остальных ребер. Тогда 
\begin{equation}
-4l-2m\, \le\, \gamma(G) - \gamma(H)\, \le\, 2k.
\end{equation} \label{Th2}
\end{theorem}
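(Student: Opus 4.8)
The plan is to reduce the present theorem to Lemma~2, obtaining the two bounds separately: the upper bound by a uniform telescoping argument, and the lower bound by a single global repair that keeps the edge classification fixed throughout.

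For the upper bound $\gamma(G)-\gamma(H)\le 2k$ I would reinsert the $k$ removed edges one at a time, producing a chain $H=H_0,H_1,\dots,H_k=G$ in which $H_i$ arises from $H_{i+1}$ by deleting one edge. Lemma~2 gives the universal estimate $\gamma(H_{i+1})-\gamma(H_i)\le 2$ for every single-edge step, irrespective of the opinions of the endpoints, since that bound is proved by taking an optimal opinion function on the smaller graph $H_i$ and repairing it with at most one change from $-1$ to $+1$ after the edge is added. Summing the $k$ telescoping inequalities gives $\gamma(G)-\gamma(H)=\sum_{i=0}^{k-1}\bigl(\gamma(H_{i+1})-\gamma(H_i)\bigr)\le 2k$, and this half uses neither $l$, $s$, nor $m$.

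For the lower bound $\gamma(G)-\gamma(H)\ge -4l-2m$, equivalently $\gamma(H)\le\gamma(G)+4l+2m$, I would fix one optimal opinion function $f$ on $G$, classify the $k$ deleted edges with respect to this fixed $f$, remove all of them at once, and then build a strictly majoritarian $\tf$ on $H$ by changing opinions only from $-1$ to $+1$. Such changes improve the voting condition of every vertex simultaneously, so they can never spoil a vote already secured. Deleting an edge whose endpoints both have opinion $-1$ only raises the neighbour sum at both endpoints, so the $s$ such edges force no correction; deleting a mixed edge worsens only its $-1$ endpoint, costing at most one flip (value $+2$); deleting an edge whose endpoints both have opinion $+1$ can worsen both endpoints, costing at most two flips (value $+4$). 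Exactly as in Lemma~2, whenever a vertex $w$ that voted for the proposal drops to $f(N_w)\le 0$ after the deletions it still satisfies $f(w)=+1$, so its remaining neighbours sum to a negative value and a $-1$ neighbour is always available to be flipped. Counting, the total number of flips is at most $2l+m$, so $\tf(V)\le f(V)+2(2l+m)=\gamma(G)+4l+2m$, and $\gamma(H)\le\tf(V)$ yields the bound; note that $s$ drops out precisely because against-against deletions are free.

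The step I expect to be the main obstacle is organising this lower-bound repair globally rather than edge by edge. If one instead telescoped Lemma~2 along the chain, the optimal opinion function, and hence the for/against type of the next edge, could change from graph to graph, so the counts $l$, $s$, $m$ fixed by the original $f$ need not survive. Keeping $f$ fixed and allowing only monotone $-1\to+1$ flips circumvents this, but two points still need care: that the flips required by different edges do not interfere (they cannot, since each one only helps every vertex), and that the per-edge budget of $0$, $1$, $2$ flips for the three edge types is not exceeded when several deleted edges share an endpoint, which holds because each lost $+1$ neighbour lowers $f(N_w)$ by one while a single flip raises it by two.
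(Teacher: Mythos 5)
Your proof is correct, and for the lower bound it takes a genuinely different --- and more carefully argued --- route than the paper. The paper proves both inequalities by pure telescoping: it removes the $k$ edges one at a time, producing the chain $G=H_0,H_1,\dots,H_k=H$, applies Lemma~2 to each pair $H_{i-1},H_i$, and sums the resulting per-edge inequalities. Your upper bound $\gamma(G)-\gamma(H)\le 2k$ coincides with that argument and is unproblematic, since the per-edge bound $+2$ holds for every combination of endpoint opinions. For the lower bound, however, the paper's telescoping silently assumes that the type of each removed edge (both endpoints $+1$, both $-1$, or mixed), which the theorem fixes relative to one optimal opinion function $f$ of $G$, is also its type relative to the optimal opinion function of the intermediate graph $H_{i-1}$ at the step where that edge is removed; the intermediate optima need not coincide with $f$, so the fixed counts $l,s,m$ need not describe the steps that Lemma~2 actually sees --- exactly the obstacle you identified. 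Your replacement (delete all $k$ edges at once, keep $f$ fixed, repair only by monotone $-1\to+1$ flips, with the loop-based availability argument and the flip budget $2l+m$) closes this gap: it re-proves the case analysis of Lemma~2 simultaneously for all deleted edges with respect to the single fixed $f$, and the sharing-an-endpoint accounting (a lost $+1$ neighbour costs one unit of neighbourhood sum, a flip restores two) is precisely what makes the aggregated budget work. What the paper's proof buys is brevity; what yours buys is a derivation of the asymmetric bound $-4l-2m$ that is actually sound. One small slip in your write-up: a vertex hurt by a mixed-edge deletion has opinion $-1$, not $+1$, so the sentence asserting $f(w)=+1$ for every damaged voter is not literally true; but for such a vertex the repair (as in Lemma~2) flips the vertex itself, whose loop raises its own neighbourhood sum by two, so availability is trivial there and your count of at most one flip per mixed edge is unaffected.
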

\begin{proof}
Удаление $k$ ребер одного за другим порождает последовательность графов $G\!:=\! H_0, H_1, ..., H_{k}\!=:\!H$. Доказательство теоремы сводится к применению  леммы 2 к парам графов $H_{i-1},H_i$, $i=1,...,l+m$ и сложению между собой получающихся неравенств \eqref{eq_cycles} для этих пар графов.
\end{proof}

Теорема 2 задает функцию попарного сходства для специального случая полного графа, согласно которой расстояние от полного графа до исходного $G$ равно количеству ребер, необходимое для дополнения графа $G$ до полного. Так как в полном графе все вершины связаны со всеми, оптимальная раскраска достигается на любой функции мнений с необходимым количеством вершин с мнением <<за>>. Поэтому возможно минимизировать отклонение числа доминирования от оптимального для исходного графа за счет удаления ребер между вершинами с мнениями <<против>>. Для этого необходимо вершинам между которыми удаляются вершины в первую очередь задать определить мнение <<против>>.

Для определения мажоритарной функции мнений графа на основе специального случая полного графа предлагается следующий алгоритм:
\begin{enumerate}
    \item Достроить граф $G$ до полного $K$. Построить мажоритарную функций мнений вершин с $(n-1) //  2$ количеством вершин с мнением <<против>>. В первую очередь определить мнения <<против>> вершин из множества вершин которые смежны достраиваемым ребрам.
    \item Последовательно удалить $k$ ребер до достижения графа $G$:
        \begin{itemize}
            \item Удалить ребро, которое принадлежит $E(K)\!\smallsetminus\! E(G)$. Проверить выполнимость $|V^+| > |V|/2$ для текущей функции мнений. Если неравенство выполняется, перейти к удалению следующего ребра.
            \item Определить инцидентную к удаленному ребру вершину с мнением $-1$. Изменить ее мнение на $+1$.
        \end{itemize}
\end{enumerate}

Вычислительная сложность полученного алгоритма равна вычислительной сложности обхода графа  $\mathO(|V|+|E|)$. Оценку точности полученного решения можно получить с помощью неравенства \ref{Th2}. Данную оценку можно использовать как критерий выбора алгоритма для решения данного примера задачи.

\section{Специальный случай на основе регулярного графа.}

Рассмотрим специальный случай связного регулярного графа с нечетным количеством вершин\cite{chebotarev2023power}.

Для перехода от исходного графа к регулярному необходимо удалять и добавлять ребра. На основе леммы 1 сформулируем оценку отклонения числа доминирования после удаления $l$ ребер и добавления $m$ ребер:

\begin{theorem}   
 Пусть $G$ -- непустой граф, а $H$ -- граф, получаемый из $G$ удалением $l$ ребер и добавлением $m$ ребер. Тогда 
\begin{equation}
-4l-2m\, \le\, \gamma(G) - \gamma(H)\, \le\,  2l+4m.
\end{equation} \label{Th3}
\end{theorem}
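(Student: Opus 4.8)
The plan is to reduce the combined removal and addition of edges to a chain of single-edge modifications and then apply Lemma 1 term by term, exactly as in the proofs of Theorems 1 and 2. Fixing any order of the $l$ deletions and $m$ insertions, I realize the passage from $G$ to $H$ as a sequence of graphs $G =: F_0, F_1, \dots, F_{l+m} =: H$ in which each $F_i$ differs from $F_{i-1}$ by a single edge. Then $\gamma(G)-\gamma(H)$ is the telescoping sum $\sum_{i=1}^{l+m}\bigl(\gamma(F_{i-1})-\gamma(F_i)\bigr)$, so it suffices to bound each summand and add the bounds.

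The key point is how Lemma 1 applies to the two kinds of step. For a deletion step $F_i = F_{i-1}\smallsetminus e$, Lemma 1 is used directly with $F_{i-1}$ as its $G$ and $F_i$ as its $H$, giving $-4 \le \gamma(F_{i-1})-\gamma(F_i) \le 2$. For an insertion step $F_i = F_{i-1}\cup e$, I read the modification backwards: Lemma 1 is applied with $F_i$ (the graph carrying the extra edge) as its $G$ and $F_{i-1}$ as its $H$, which gives $-4 \le \gamma(F_i)-\gamma(F_{i-1}) \le 2$; negating, $-2 \le \gamma(F_{i-1})-\gamma(F_i) \le 4$. Hence each of the $l$ deletion summands lies in $[-4,2]$ and each of the $m$ insertion summands lies in $[-2,4]$. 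Summing yields the lower bound $-4l-2m$ and the upper bound $2l+4m$, which is the claimed inequality. Since the graph carrying the modified edge always contains at least that edge, it is nonempty at every step, so Lemma 1 is legitimately applicable throughout; and because the bound of Lemma 1 holds irrespective of the incident opinions, the order chosen for the modifications does not affect the result.

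The argument is routine once the telescoping is in place; the only step needing care --- and the main conceptual obstacle --- is recognizing that an insertion must be read as a deletion in reverse, so that the asymmetric interval $[-4,2]$ of Lemma 1 is reflected into $[-2,4]$ for the insertion steps. This reflection is exactly what produces the asymmetry between the coefficients of $l$ and $m$ in the two bounds, and it is also why the upper estimate here ($2l+4m$) differs from the sharper one available in Theorem 2, where all $k$ modifications are deletions.
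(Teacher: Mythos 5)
Your proof is correct and follows essentially the same route as the paper: decompose the passage from $G$ to $H$ into $l+m$ single-edge steps, apply Lemma~1 to each consecutive pair of graphs, and sum the resulting inequalities. You are in fact more explicit than the paper's own very terse proof on the one delicate point --- that for an insertion step Lemma~1 must be applied with the roles of its $G$ and $H$ swapped, reflecting the interval $[-4,2]$ into $[-2,4]$ --- which is exactly what produces the asymmetric bounds $-4l-2m$ and $2l+4m$.
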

\begin{proof}
Удаление $l$ ребер и добавление $m$ ребер одного за другим порождает последовательность графов $G\!:=\! H_0, H_1, ..., H_l, H_{l+1}, ..., H_{l+m}\!=:\!H$. Доказательство теоремы сводится к применению  леммы 1 к парам графов $H_{i-1},H_i$, $i=1,...,l+m$ и сложению между собой получающихся неравенств \eqref{eq_cycles} соответствующих этим парам графов.
\end{proof}

Для поиска ближайшего специального случая для исходного графа с нечетным количеством вершин предлагается следующий алгоритм:
\begin{enumerate}
    \item Определим степени вершин исходного графа.
    \item Выберем степень регулярности $k$, равную ближайшему целому среднему 
    \footnote{Можно использовать медианное значение.} исходного графа.
    Степень должна быть четной, так как сумма всех степеней в регулярном графе равная $nk$ должна быть чётной.
    \item Генерируем $k-$регулярный граф с тем же количеством вершин.
    \item Сравниваем графы по редактирующему расстоянию $d$ и выбираем с наименьшим значением.
\end{enumerate}

На рис. \ref{fig:reg} представлен результат работы алгоритма. Для графа $G$ найден ближайший $4$-регулярный граф $H$ с редактирующим расстоянием $3$. Сложность представленного алгоритма за исключением поиска редактирующего расстояния составляет $\mathO(|V|+|E|)$. Сложность поиска редактирующего расстояния является экспоненциальной, поэтому предлагается использовать жадный алгоритм поиска на основе сравнения степеней вершин. \LeftNote{Рис.\ref{fig:reg}}
Главная идея заключается в сопоставлении вершин с похожими степенями, а затем минимизации количества изменений рёбер для приведения исходного графа к регулярному до полного совпадения графов. Временная сложность данной эвристики $\mathO(n^2)$.

С помощью предложенного алгоритма найдем ближайший регулярный граф и  аналитическое решение для него:
\begin{equation}
    \gamma(n,k) = \frac{1}{2}(k+1)\max\{\frac{n+1}{2d},1\}+\mathds{1}_{\{5,3\}\{9,5\}}(n,k);
\end{equation}
где $n$ - количество вершин в графе, $k$ - степень регулярного графа, индикатор $\mathds{1}_{\{5,3\}\{9,5\}}(n,k)$ равен $1$ при равенстве $n,k=5,3$ и $n,k=9,5$ соответственно.
Для получения решения исходной задачи добавим и удалим необходимое количество ребер в регулярном графе, с использованием следующего правила:

\begin{itemize}
    \item Добавить/удалить ребро, которое необходимо для получения исходного графа. Проверить выполнимость $|V^+| > |V|/2$ для текущей функции мнений. Если неравенство выполняется, перейти к добавлению/удалению следующего ребра.
    \item Определить инцидентную к новому/удаленному ребру вершину с мнением $-1$. Изменить ее мнение на $+1$.
\end{itemize}

Вычислительная сложность алгоритма определяется вычислительной сложностью генерации $k-$регулярного графа на $n$ вершинах $\mathO(nk^2)$.На основе теоремы 3 и редактирующего расстояния $d=l+m$ получим оценку точности  решения.

\section{Заключение.}

Основной целью применения метода попарного сходства является исследование структур $NP-$полных задач и расширение применимости полиномиальных алгоритмов на более широком классе задач. В данной работе продемонстрировано использование метода попарного сходства для задачи поиска числа доминирования в результате которого обе цели были выполнены.

 Были исследованы специальные случаи задачи поиска числа доминирования для двухэтапного голосования. Рассмотрены три специальных случая: граф связи вершин дерево, полный граф, регулярный граф. Построены критерии расстояния примеров задач до специальных случаев. Метод попарного сходства позволил нам расширить применение эвристических решений для полного графа и регулярного графа и полиномиального алгоритма для дерева на более широкий класс задач.
Для каждого случая построены функции попарного сходства, что позволяет до использования алгоритма получить оценку погрешности решения. Это дает возможность производить выбор алгоритма на этапе постобработки исходной задачи.

В ходе проведения исследований остались открытыми вопросы поиска ближайшего регулярного графа: выбор эвристического алгоритма для поиска редактирующего расстояния и выбора степени вершины регулярного графа. Планируется провести исследования влияния выбора алгоритма поиска редактирующего расстояния и степени вершин регулярного графа на результат алгоритма. 

\pagebreak

\renewcommand{\section}[2]{} 
\begin{center}
\textbf{СПИСОК ЛИТЕРАТУРЫ}
\end{center}

\pagebreak

Подписи к рисункам статьи Шушко и др., ТиСУ 2025, \textnumero 

Рис. \ref{fig:reg}.
{Исходный граф $G$ и $4$-регулярный граф $H$, с редактирующим расстоянием $3$.
    }
\pagebreak
\begin{figure}[ht!]
    \centering
    $G$\hspace*{19em}$H$
    
    \includegraphics[width = 0.9\textwidth]{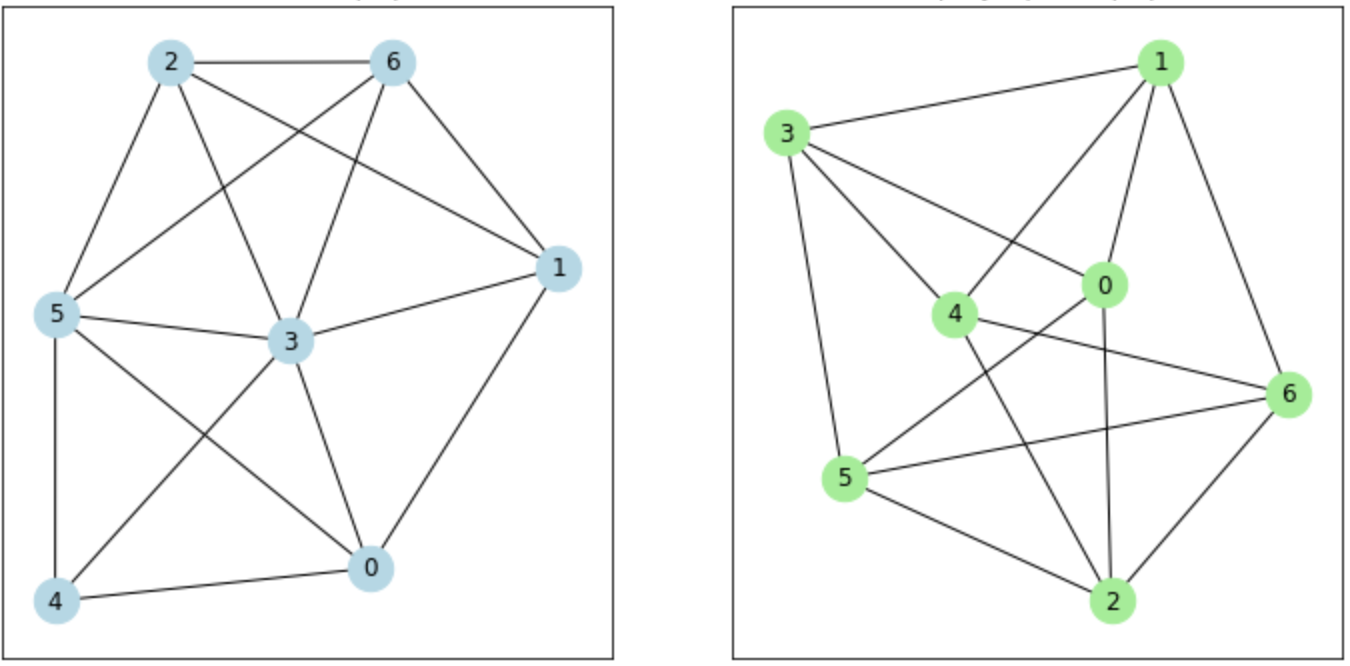}
    \caption{Шушко и др., ТиСУ 2025, \textnumero }
   \label{fig:reg}
\end{figure}
\end{document}